\newtheorem{theorem}{Theorem}[]
\newtheorem{corollary}{Corollary}[]
\def\BibTeX{{\rm B\kern-.05em{\sc i\kern-.025em b}\kern-.08em
    T\kern-.1667em\lower.7ex\hbox{E}\kern-.125emX}}
\newcommand{\ket}[1]{|#1\rangle}
\newcommand{\bra}[1]{\langle#1|}
\title{\LARGE \bf Complete Positivity of Subsystems in Quantum Dynamics}
\author{Anumita Mukhopadhyay$^{*}$, Praggnyamita Ghosh and Shibdas Roy
\thanks{AM and SR are with the Center for Quantum Engineering, Research and Education (CQuERE), TCG Centres for Research and Education in Science and Technology (TCG CREST), Kolkata 700091, India and Academy of Scientific and Innovative Research (AcSIR), Ghaziabad 201002, India. PG is with Ramakrishna Mission Vivekananda Educational and Research Institute, Belur Math, Howrah 711202, India and Center for Quantum Engineering, Research and Education (CQuERE), TCG Centres for Research and Education in Science and Technology (TCG CREST), Kolkata 700091, India.\linebreak
$^{*}$Email: {\tt\small anumitamukherjee455@gmail.com}}%
}
\begin{document}

\maketitle
\begin{abstract}
Although many quantum channels satisfy Completely Positive Trace Preserving (CPTP) condition, there are valid quantum channels that can be non-completely positive (NCP). 
As memory effects can provide advantages in the dynamics of noisy quantum systems, we investigate the relative CP condition and the CP-divisibility condition of the system and environment subsystems of a joint system-environment quantum state evolving noiselessly. We show that the system and environment channels must be both CP (NCP) or CP-divisible (CP-indivisible) for the evolution in the joint system-environment space to be unitary. We illustrate our results with examples of Bell state created from $\ket{00}$, GHZ state created from $\ket{000}$, W state created from $\ket{100}$, and the partial transpose (PT) operation acting on the Bell state.
\end{abstract}

\begin{IEEEkeywords}
Completely Positive, Complete Positivity divisible, Open quantum systems, Memory effects. 
\end{IEEEkeywords}

\section{Introduction}
Open quantum systems, comprising a system Hilbert space, when in interaction with an environment Hilbert space, evolve jointly via unitary transformations. This interaction between the system and environment spaces introduces noisy evolution of the system. The dynamics of an open quantum system is characterized by using operator-sum representation, also known as the Choi-Kraus representation. 
The operator-sum representation or Kraus representation for noisy CPTP channels are often a convex mixture of unitaries. Such channels are known as unital quantum channels. 
According to Choi-Kraus theorem, Kraus operators $A_k$ satisfy the relation, $\sum _k A_k^\dagger A_k \leq \mathbb{I}$, where equality ensures trace preservation. Moreover, Kraus operators for a unital channel satisfy the relation $\sum _k A_k A_k ^\dagger = \mathbb{I}$, while those of a non-unital channel do not satisfy this relation.

Quantum computing currently faces a significant hurdle due to the presence of noise, which usually degrades computational efficiency by introducing decoherence. Error correction and noise mitigation procedures often involve cost-inefficient techniques. Consequently, exploring the usefulness of noise rather than avoiding it, presents a promising avenue \cite{urbasi}. This can potentially evade the limitations in Noisy Intermediate Scale Quantum (NISQ) Computing. 
Ref.~\cite{cirac} highlights how dissipation can be an alternative approach to quantum computing and state engineering of wide range of highly correlated states, even without coherent dynamics. Similarly, Ref.~\cite{temporal} shows that amplitude damping noise can be effectively harnessed for Quantum Reservoir Computing (QRC) and information processing. This concept of engineered dissipation for quantum information processing is a promising field, as further established in Ref.~\cite{dissipaton}. Interestingly, noise can play a role in generating quantum correlations. Ref.~\cite{qcorr} indicates that non-unital noisy channels can foster quantum correlations in multi-qubit systems, while unital noisy channels can do the same for multi-qudit systems. Furthermore, research like Ref.~\cite{req} illustrates that mixed entangled states can exhibit significantly more non-classicality than separable and pure entangled states. In fact, noise can robustly enhance entanglement within a quantum system, as shown in Ref.~\cite{noise}. Besides, it is shown in Ref.~\cite{nonmarkov} that memory effects can possibly revive quantum correlations after some initial decay. The backflow of information from environment to the system has different thought-provoking features as shown in Ref.~\cite{banerjee,BLP,Utagi,Brodutch}. Similar to non-unitality, non-Markovianity is shown to be a useful resource in Ref.~\cite{srikanth,sibasish,Kumar,pathak,open_non_markov,teleport,naikoo}. Thus, in order to summarize the conditions on the noisy quantum channels to be useful, we need to investigate non-unitality as well as non-Markovianity.
In our previous work \cite{self}, we have already shown that for unitary evolution of a joint system-environment quantum state, if the system evolves unitally (non-unitally), then the environment will also evolve unitally (non-unitally). In extension to this, here we explore complete positivity conditions on the system and environment channels.
To investigate further on the non-Markovianity conditions on the system and environment, we need to find out if the system and environment channels are both Complete Positivity (CP)-divisible or not. It is known that non-Markovianity of quantum channels is defined in terms of the presence of intermediate NCP maps. However, there are non-Markovian dynamics despite the absence of CP-indivisibility \cite{Rivas_2014,rivas,Sbannerjee}. Here, we investigate CP-divisibility of system and environment channels to more precisely arrive at the conditions under which memory effects manifest within these quantum processes. Our results are illustrated with Bell state created from $\ket{00}$, GHZ state created from $\ket{000}$, W state created from $\ket{100}$, by splitting the corresponding unitaries into two or three unitaries in series, and the non-CP partial transpose operation(s) acting on the maximally entangled Bell state.


\section{Choi-Kraus representation} \label{sec:CKS}
For $\varepsilon$ to be a valid quantum operation on an initial state $\rho$, the probability of the process to occur should be $0 \leq {\rm Tr}[\varepsilon(\rho)]\leq 1$; on the set of density matrices, it should be convex-linear map given by $\varepsilon(\sum_ip_i \rho_i)=\sum_i p_i \varepsilon(\rho_i)$, and it should be a completely positive map. The following theorem states the afore-mentioned notion of quantum operations \cite{choi}. 
\begin{theorem}
    A map $\varepsilon$ is a valid quantum operation, iff
    \begin{equation}
        \varepsilon(\rho)=\sum_k A_k \rho A_k ^\dagger, 
    \end{equation}
 for a set of operators $\{A_k\}$, such that
\begin{equation*} \sum _k A_k^\dagger A_k \leq \mathbb{I}.
     \end{equation*}
\end{theorem}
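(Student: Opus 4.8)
The plan is to establish both directions of the equivalence, treating the reverse implication as routine verification and reserving the real work for the forward implication. For the direction that \emph{assumes} the operator-sum form, I would check each defining property of a quantum operation in turn. Convex-linearity is immediate, since $\rho \mapsto \sum_k A_k \rho A_k^\dagger$ is manifestly linear. The trace bound follows from $\mathrm{Tr}[\varepsilon(\rho)] = \mathrm{Tr}\big[\big(\sum_k A_k^\dagger A_k\big)\rho\big]$ together with $\sum_k A_k^\dagger A_k \leq \mathbb{I}$ and $\rho \geq 0$, which yields $0 \leq \mathrm{Tr}[\varepsilon(\rho)] \leq \mathrm{Tr}[\rho] \leq 1$. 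Complete positivity holds because, for any ancilla, $(\varepsilon \otimes I)(\cdot) = \sum_k (A_k \otimes I)(\cdot)(A_k \otimes I)^\dagger$ is a sum of conjugations and hence sends positive operators to positive operators.

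For the forward direction I would invoke the Choi--Jamio\l{}kowski construction. First I would extend $\varepsilon$, defined a priori only on density matrices, to a linear map on the full operator space, using convex-linearity together with the fact that every operator is a complex combination of density matrices. Next I would introduce a reference copy of the Hilbert space and the unnormalized maximally entangled vector $\ket{\alpha} = \sum_i \ket{i}\ket{i}$, and form the Choi matrix $\sigma = (\varepsilon \otimes I)(\ket{\alpha}\bra{\alpha}) = \sum_{ij} \varepsilon(\ket{i}\bra{j}) \otimes \ket{i}\bra{j}$. The crucial step is that complete positivity forces $\sigma \geq 0$, so that $\sigma$ admits a decomposition $\sigma = \sum_k \ket{s_k}\bra{s_k}$ into unnormalized eigenvectors.

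I would then set up the vectorization correspondence: writing $\ket{s_k} = \sum_{ij} (A_k)_{ij}\,\ket{i}\ket{j}$ defines operators $A_k$ via $(A_k)_{ij}$, equivalently $\ket{s_k} = (A_k \otimes I)\ket{\alpha}$. Substituting this back into the two expressions for $\sigma$ and matching the $\ket{i}\bra{j}$ blocks gives $\varepsilon(\ket{i}\bra{j}) = \sum_k A_k \ket{i}\bra{j} A_k^\dagger$, which extends to $\varepsilon(\rho) = \sum_k A_k \rho A_k^\dagger$ by linearity. Finally, the trace-non-increasing property $\mathrm{Tr}[\varepsilon(\rho)] \leq \mathrm{Tr}[\rho]$ for all $\rho \geq 0$ is equivalent, via $\mathrm{Tr}[(\mathbb{I} - \sum_k A_k^\dagger A_k)\rho] \geq 0$, to $\sum_k A_k^\dagger A_k \leq \mathbb{I}$. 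The main obstacle is precisely the reliance on complete positivity rather than mere positivity: it is complete positivity, applied to the map acting on the entangled state $\ket{\alpha}\bra{\alpha}$ rather than on product inputs, that guarantees $\sigma \geq 0$ and hence the existence of a genuine Kraus decomposition, so I would take care to flag where this hypothesis is used.
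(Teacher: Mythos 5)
Your proposal is correct and takes essentially the same route as the paper's proof: the paper defers the theorem to its Ref.~[NC] (Nielsen--Chuang), whose Choi--Jamio\l{}kowski argument --- Choi matrix $\sigma=(\varepsilon\otimes I)(\ket{\alpha}\bra{\alpha})$, positivity from complete positivity, vectorization $\ket{s_k}=(A_k\otimes I)\ket{\alpha}$, and the trace bound giving $\sum_k A_k^\dagger A_k \leq \mathbb{I}$ --- is exactly what you reconstruct. The one direction the paper does verify inline (complete positivity of the Kraus form via $\ket{\Phi_k}=(\mathbb{I}_R\otimes A_k^\dagger)\ket{\Psi}$ and $\bra{\Psi}(\mathbb{I}\otimes\varepsilon)\Omega\ket{\Psi}\geq 0$) coincides with your check of the reverse implication, so there is no substantive divergence.
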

The proof can be found in Ref.~\cite{NC}.

If some quantum operation $\varepsilon$ is a trace preserving map, then 
\begin{eqnarray*}
    {\rm Tr}(\varepsilon(\rho))&=& \sum_k {\rm Tr}(A_k \rho A_k ^\dagger)\\
    &=& \sum_k {\rm Tr}(A_k ^\dagger A_k \rho)\\
    &=& {\rm Tr}(\rho)=1,
\end{eqnarray*}
which implies that we must have $\sum_k A_k ^\dagger A_k = \mathbb{I}$, that is known as completeness relation.

If a state $\ket{\Psi}$ is considered on $\mathcal{H_S}\otimes\mathcal{H_R}$, where $\mathcal{H_S}$ is the system Hilbert space and $\mathcal{H_R}$ is a reference Hilbert space, and a positive operator $\Omega$ is taken into account, then
\begin{eqnarray*}
    \bra{\Psi}(\mathbb{I}\otimes\varepsilon)\Omega\ket{\Psi}&=&\sum_k \bra{\Psi}(\mathbb{I}_R\otimes A_k)\Omega (\mathbb{I}_R \otimes A_k ^\dagger)\ket{\Psi}\\
    &=& \sum_k \bra{\Phi_k}\Omega\ket{\Phi_k} \geq 0,
\end{eqnarray*}
where $\ket{\Phi_k}= \mathbb{I}_R\otimes A_k ^\dagger \ket{\Psi}$. As $\Omega \geq 0$, we also have $(\mathbb{I}\otimes \varepsilon) \Omega \geq 0 \implies (\mathbb{I}\otimes \varepsilon) \geq 0 $. Thus, $\varepsilon$ is a CP map, and any CP map can be written as $\varepsilon(\rho)= \sum_k A_k \rho A_k^\dagger$. 

Thus, we can infer from the above theorem that \textbf{the completeness relation can be a witness of complete positivity for a trace-preserving map.} Thus, a trace-preserving (TP) non-completely positive (NCP) map will not satisfy this completeness relation.

\section{Results}\label{sec:result}
It is already known that a completely positive trace-preserving (CPTP) map is a valid quantum operation \cite{sudarshan}. However, sometimes NCP maps can also be legitimate description of open quantum systems as shown in Ref.~\cite{linta}. For NCP maps to be a valid quantum channel, they also need to satisfy a trace preserving condition. The Kraus representation for an NCP map is given by \cite{Jagadish_2018}: $D_\alpha=\sqrt{\lambda_\alpha}~{\rm mat} \ket{\Lambda^{(\alpha)}}$ for positive eigenvalues $\lambda_\alpha$ of the $\mathcal{B}$ matrix with eigenvectors $\ket{\Lambda^{(\alpha)}}$ \cite{B-matrix} and $F_\alpha=|\sqrt{\lambda_\alpha}|~{\rm mat} \ket{\Lambda^{(\alpha)}}$ for the negative eigenvalues of the same. Here, ${\rm mat}\ket{\Lambda^{(\alpha)}}$~\emph{matricizes} the vectors $\ket{\Lambda^{(\alpha)}}$; please see Ref.~\cite{B-matrix} for details. The trace preserving condition for the Kraus operators of the NCP map will then be: $\sum_{\alpha=0}^{k-1} D_\alpha^\dagger D_\alpha - \sum_{\alpha=k}^ {N^2-1}F_\alpha^\dagger F_\alpha=\mathbb{I}$, where $\alpha=0,1,2, \cdots k-1$ are for positive eigenvalues among a total number of $N^2$ non-zero eigenvalues.  
We are keen to explore the CP and NCP conditions on the quantum channels acting on the system and environment individually of a joint system-environment quantum state evolving via some unitary.

\begin{corollary}
A CP quantum channel cannot become NCP or vice-versa under the action of a unitary.
\end{corollary}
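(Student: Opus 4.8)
The plan is to show that composing a quantum channel with a unitary preserves whether the channel is CP or NCP, by tracking how the Kraus operators and the completeness relation transform. First I would fix what ``action of a unitary'' means: if $\varepsilon(\rho)=\sum_k A_k \rho A_k^\dagger$ is the channel and $\mathcal{U}(\rho)=U\rho U^\dagger$ the unitary map, the relevant composites are $\mathcal{U}\circ\varepsilon$ and $\varepsilon\circ\mathcal{U}$, whose Kraus operators are $\{UA_k\}$ and $\{A_k U\}$, respectively.

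The key step is to invoke the result established in Section~\ref{sec:CKS}, namely that for a trace-preserving map the completeness relation $\sum_k A_k^\dagger A_k=\mathbb{I}$ is a witness of complete positivity, so that such a map is CP precisely when this relation holds and NCP when it fails. I would then compute the transformed quantity. For post-composition, $\sum_k (UA_k)^\dagger(UA_k)=\sum_k A_k^\dagger U^\dagger U A_k=\sum_k A_k^\dagger A_k$, which is manifestly unchanged; for pre-composition, $\sum_k (A_k U)^\dagger(A_k U)=U^\dagger\left(\sum_k A_k^\dagger A_k\right)U$, which equals $\mathbb{I}$ if and only if $\sum_k A_k^\dagger A_k=\mathbb{I}$. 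Hence the completeness relation either holds both before and after the unitary action, or fails both before and after.

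Combining these: if $\varepsilon$ is CP the composite still satisfies the completeness relation and stays CP, whereas if $\varepsilon$ is NCP the composite still violates it and stays NCP. Equivalently, I can argue at the level of maps, using that $\mathcal{U}$ is CP with a CP inverse $\mathcal{U}^{-1}(\rho)=U^\dagger\rho U$: since composition of CP maps is CP, were $\mathcal{U}\circ\varepsilon$ to be CP, then $\varepsilon=\mathcal{U}^{-1}\circ(\mathcal{U}\circ\varepsilon)$ would also be CP, contradicting the assumed NCP nature of $\varepsilon$.

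I expect the main obstacle to be the converse direction, that an NCP channel cannot be rendered CP, since CP-preservation alone is immediate from closure of CP maps under composition; the non-trivial content is that the invertibility of the unitary forbids the completeness relation from being \emph{created} where it was absent. Care is also needed to state the claim for both orderings of the composition and to emphasise that unitary conjugation, unlike a general channel, possesses a CP inverse, which is exactly what powers the two-sided equivalence.
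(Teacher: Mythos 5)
Your proof is correct, but it proves a genuinely different reading of the corollary than the paper does. The paper interprets ``the action of a unitary'' as the \emph{unitary freedom of the Kraus representation}: it relates two Kraus sets by a unitary matrix acting on the index space, $L_j=\sum_i U_{ji}K_i$, and shows $\sum_j L_j^\dagger L_j=\sum_i K_i^\dagger K_i$ (the cross terms vanish by $\sum_j U_{ji}^{*}U_{ji'}=\delta_{ii'}$), so the completeness-relation witness of Section~II is invariant under a change of Kraus representation of the \emph{same} channel. You instead transform the channel itself, pre- or post-composing $\varepsilon$ with the unitary channel $\mathcal{U}(\rho)=U\rho U^\dagger$, giving Kraus sets $\{UA_k\}$ or $\{A_kU\}$ --- which is not a unitary mixing on the index $k$, so the two statements do not reduce to one another. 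Your computations are sound: the witness $\sum_k A_k^\dagger A_k$ is preserved exactly under post-composition and up to conjugation $U^\dagger(\cdot)U$ under pre-composition, which preserves equality with $\mathbb{I}$; and your abstract argument --- $\mathcal{U}$ is CP with CP inverse $\mathcal{U}^{-1}(\rho)=U^\dagger\rho U$, composition of CP maps is CP, hence $\mathcal{U}\circ\varepsilon$ is CP iff $\varepsilon$ is --- is the cleanest route, and correctly isolates invertibility as the reason an NCP map cannot be \emph{rendered} CP, the direction you rightly identify as the nontrivial one. What the paper's version buys is well-definedness: since the paper diagnoses CP vs.\ NCP throughout its examples by checking $\sum_k K_k^\dagger K_k=\mathbb{I}$ for one particular Kraus set, it needs that verdict to be independent of the representation chosen, which is exactly the unitary-freedom statement. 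What your version buys is the literal dynamical claim that composing with unitary evolution cannot flip CP to NCP or back, stated two-sidedly and for both orderings. If your proof were to replace the paper's, you should add the one-line mixing computation above so that the representation-independence the paper actually relies on is still covered.
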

\begin{proof}
Let a set of Kraus operators $K_i$ be related to another set of Kraus operators $L_j$ via unitary freedom of Kraus operators, given by $L_j=U_{ji}K_i$. Thus,
\begin{eqnarray*}
    \sum_j L_j ^\dagger L_j &=& \sum_{i,j} (U_{ji}K_i)^\dagger (U_{ji}K_i)\\
    &=& \sum_{i,j} K_i ^\dagger U_{ji}^\dagger U_{ji}K_i\\
    &=& \sum_i K_i ^\dagger K_i,
\end{eqnarray*}
since $\sum_j U_{ji}^\dagger U_{ji}=1\, \forall i$ as $U^\dagger U= \mathbb{I}$. So, if $K_i$ is a CPTP map, i.e.~$\sum_i K_i ^\dagger K_i=\mathbb{I}$, then, $L_j$ must also be a CPTP map.
\end{proof}

We next show that \textbf{if a trace-preserving system channel is CP, then the corresponding trace-preserving environment channel must also be CP}.

\begin{theorem}
    If a trace-preserving channel acting on the system is completely positive, then the trace preserving channel acting on the environment must also be completely positive.
\end{theorem}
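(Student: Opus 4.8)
The plan is to reduce the complete positivity of each reduced channel to its completeness relation, exactly as licensed by the witness established above: for a trace-preserving map, $\sum_k A_k^\dagger A_k = \mathbb{I}$ holds if and only if the map is CP. It therefore suffices to show that, under a unitary joint evolution $U$ on $\mathcal{H}_S \otimes \mathcal{H}_E$, the completeness relation for the system channel forces the completeness relation for the environment channel.

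First I would write the two Kraus families symmetrically. With $\rho_E = \sum_k p_k \ket{a_k}\bra{a_k}$ and $\rho_S = \sum_\nu q_\nu \ket{b_\nu}\bra{b_\nu}$, the system Kraus operators are $K_{ik} = \sqrt{p_k}\,\bra{a_i}U\ket{a_k}$ acting on $\mathcal{H}_S$ (environment contracted), as in the Lemma, while the environment Kraus operators are obtained by interchanging the two tensor factors, $\tilde{K}_{\mu\nu} = \sqrt{q_\nu}\,\bra{b_\mu}U\ket{b_\nu}$ acting on $\mathcal{H}_E$ (system contracted). Both families are built from the same $U$; only the contracted factor changes. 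Note that the CP-ness of the system channel is, in this dilation picture, encoded precisely in $U$ being a genuine unitary, so the hypothesis supplies $U^\dagger U = U U^\dagger = \mathbb{I}_{SE}$.

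The core step is then the computation $\sum_{\mu,\nu} \tilde{K}_{\mu\nu}^\dagger \tilde{K}_{\mu\nu} = \mathbb{I}_E$. Summing over the system output index $\mu$ first and using $\sum_\mu \ket{b_\mu}\bra{b_\mu} = \mathbb{I}_S$ collapses $\tilde{K}_{\mu\nu}^\dagger \tilde{K}_{\mu\nu}$ into the partial matrix element $q_\nu\,\bra{b_\nu}U^\dagger U\ket{b_\nu}$; unitarity $U^\dagger U = \mathbb{I}_{SE}$ reduces this to $q_\nu \mathbb{I}_E$, and $\sum_\nu q_\nu = 1$ completes the relation. Feeding this completeness relation back into the witness certifies that the environment channel is CP, which is the claim.

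I expect the main obstacle to be the bookkeeping of the partial contractions: one must contract the correct tensor factor in each sum and verify that the completeness relation and the unitarity combine as $U^\dagger U$ (and not $U U^\dagger$) in the right order. A secondary subtlety is that the witness certifies CP only for maps already known to be trace preserving, so before invoking it on the environment side I would first confirm, from the same collapse computation, that the environment channel is trace preserving, so that the equivalence between completeness and CP is legitimately applicable there.
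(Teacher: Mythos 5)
There is a genuine gap: your argument never actually uses the hypothesis that the \emph{system} channel is CP, and the place where you believe it enters is where the circularity hides. You posit at the outset that the environment Kraus operators take the dilation form $\tilde{K}_{\mu\nu} = \sqrt{q_\nu}\,\bra{b_\mu}U\ket{b_\nu}$. But, as the paper's own Lemma emphasizes, the availability of this form is \emph{equivalent} to the channel being CP (``we cannot write Kraus operators $K_i$ in the above form $\ldots$ when the map is non-CP''). So writing $\tilde{K}_{\mu\nu}$ this way assumes the conclusion: your collapse computation $\sum_{\mu,\nu}\tilde{K}_{\mu\nu}^\dagger\tilde{K}_{\mu\nu}=\mathbb{I}_E$ is algebraically correct, but it only certifies CPTP-ness of the map \emph{defined} by those dilation-form operators, and you never tie that map to the given environment channel $\mathbb{B}$. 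Worse, your claim that the system's CP-ness is ``encoded precisely in $U$ being a genuine unitary'' makes the argument prove too much: since your computation invokes nothing but unitarity of $U$ and a resolution of identity, it would show that \emph{any} joint unitary forces the environment channel to be CP. The paper's Example 4 refutes exactly this: $\mathcal{B}_T\otimes\mathcal{B}_T$ is a joint unitary map on the Bell state, yet both the system and environment channels are the NCP transpose map. In this paper's framework, joint unitarity alone does not carry the hypothesis.

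The paper closes precisely this gap by coupling the two Kraus families through the joint action, $U\ket{\Psi}\ket{\Lambda}=\sum_i A_i\ket{\Psi}\ket{a_i}=\sum_j B_j\ket{\psi_j}\ket{\Lambda}$, and then applying $U$ twice to get $U^2\ket{\Psi}\ket{\Lambda}=\sum_{i,j}(A_i\otimes B_j)\ket{\psi_j}\ket{a_i}$. The CP hypothesis on the system does load-bearing work there: it licenses $A_i=\bra{a_i}U\ket{\Lambda}$, which in turn forces the factorization $A_i\otimes B_j=\bra{a_i}U\ket{\Lambda}\otimes\bra{\psi_j}U\ket{\Psi}$ --- i.e., it is the system's CP-ness that puts the \emph{given} $B_j$ into dilation form, rather than a symmetry postulate. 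Unitarity then yields $\sum_{i,j}A_i^\dagger A_i\otimes B_j^\dagger B_j=\mathbb{I}$, and feeding in the system completeness relation $\sum_i A_i^\dagger A_i=\mathbb{I}$ delivers $\sum_j B_j^\dagger B_j=\mathbb{I}$ for the environment. To repair your proof you would need this (or an equivalent) step deriving the dilation form of the environment's actual Kraus operators from the assumed CP-ness of the system channel; without it, the hypothesis is idle and the conclusion is assumed.
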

\begin{proof}
    Let us take a state $\ket{\Psi}$ on the Hilbert space $\mathcal{H_S}\otimes \mathcal{H_E}\otimes \mathcal{H_R}$ where $\mathcal{H_S}$ is the system Hilbert space, $\mathcal{H_E}$ is the environment Hilbert space and $\mathcal{H_R}$ is a reference Hilbert space.
    Here, we take $\varepsilon_S$ as the quantum channel acting on the system and $\varepsilon_E$ as the quantum channel acting on the environment. We take another positive operator $\Omega$.
    Now, we can write:
    \begin{eqnarray}
        &&\bra{\Psi}(\mathbb{I}_R \otimes \varepsilon_S \otimes\varepsilon_E)\Omega\ket{\Psi} \\ \nonumber
        &=& \bra{\Psi}\sum_{k,l}(\mathbb{I}_R \otimes A_k \otimes B_l)\Omega (\mathbb{I}_R \otimes A_k ^\dagger \otimes B_l ^\dagger)\ket{\Psi} \\ \nonumber
        &=& \sum_{k,l}\bra{\psi_{kl}}\Omega\ket{\psi_{kl}} \geq 0,
    \end{eqnarray}
since $\Omega$ is a positive operator. Here $\ket{\psi_{kl}}=(\mathbb{I}_R \otimes A_k ^\dagger \otimes B_l ^\dagger)\ket{\Psi}$.

Hence, 
\begin{align*}
    &&(\mathbb{I}_R \otimes \varepsilon_S\otimes \varepsilon_E)\Omega &\geq& 0 \\ 
    &\implies& (\mathbb{I}_R \otimes \varepsilon_S\otimes \varepsilon_E) &\geq& 0 \\
   &\implies& \varepsilon_S\otimes \varepsilon_E &\geq& 0.
\end{align*}   
Thus, $\varepsilon_S$ and $\varepsilon_E$ should be both completely positive or both non completely positive.

Further, let us define a unitary, $U$, acting on the system-environment joint quantum state. Let us define the state of the system as $\ket{\Psi}=\sum_j \sqrt{q_j}\ket{\psi_j}$, undergoing a map $\mathbb{A}$ with noise operators $A_i$, and the state of the environment as $\ket{\Lambda}=\sum_i\sqrt{p_i}\ket{a_i}$, undergoing a map $\mathbb{B}$ with noise operators $B_j$. Then the unitary acts on the joint quantum state as follows:
\begin{eqnarray}\label{eq:unitary}
    U\ket{\Psi}\ket{\Lambda}&=& \sum_i A_i \ket{\Psi}\ket{a_i}=\sum_j B_j \ket{\psi_j}\ket{\Lambda}\\ \nonumber
    \implies U^2\ket{\Psi}\ket{\Lambda} &=& \sum_{i,j}(A_i \otimes B_j) \ket{\psi_j}\ket{a_i}
\end{eqnarray}
If the system map $\mathbb{A}$ is CP, then we will have $A_i=\bra{a_i}U\ket{\Lambda}$. Then, it follows from the above equation that:
\begin{eqnarray}\label{eq:firstkraus}
     (A_i \otimes B_j) &=& \nonumber\bra{a_i}\bra{\psi_j}U^2\ket{\Psi}\ket{\Lambda}\\ &=& \bra{a_i}U\ket{\Lambda}\otimes \bra{\psi_j}U\ket{\Psi}\\ \nonumber
    \implies \sum_{i,j} A_i ^\dagger A_i \otimes  B_j ^\dagger B_j &=& \sum_{i,j} \bra{\Lambda}U^\dagger\ket{a_i}\bra{a_i}U\ket{\Lambda} \\ \nonumber
    &&\otimes \bra{\Psi}U^\dagger \ket{\psi_j}\bra{\psi_j}U\ket{\Psi} \\ \nonumber
    &=& \bra{\Lambda}U^\dagger U \ket{\Lambda} 
    \otimes \bra{\Psi}U^\dagger U\ket{\Psi}  \\ 
    &=&\mathbb{I}
\end{eqnarray}
Hence, if $\sum_i A_i ^\dagger A_i=\mathbb{I}$, we must have $\sum_j B_j ^\dagger B_j= \mathbb{I}$, i.e.~if $\mathbb{A}$ is a CPTP map, $\mathbb{B}$ will also be a CPTP map. Likewise, if $\mathbb{A}$ is non-CP, then $\mathbb{B}$ must also be non-CP.
\end{proof}

A CPTP map $\varepsilon(t_2,t_0)$ will be defined as a CP divisible \cite{Rivas_2014,Utagi} map if for an intermediate time step $t_1$, we have:
\begin{equation}\label{eq:CP_divisibility}
    \varepsilon(t_2,t_0)=\varepsilon(t_2,t_1)\varepsilon(t_1,t_0),
\end{equation}
such that $\varepsilon(t_2,t_1)$ and $\varepsilon(t_1,t_0)$ are both CP maps for $t_0 \leq t_1 \leq t_2$. It is known that CP-indivisibility implies memory effects, but CP-divisible maps can also have memory effects \cite{Utagi}. Here, we explore CP-divisibility conditions of system and environment, evolving unitarily as a joint quantum state.  
\begin{theorem}
If the quantum channel acting on the system is CP-divisible, then the quantum channel acting on the environment must also be CP-divisible.
\end{theorem}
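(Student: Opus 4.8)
The plan is to reduce CP-divisibility of the environment channel to the single-interval correspondence already established in the preceding theorem (system channel CP $\implies$ environment channel CP). First I would invoke the hypothesis through the divisibility equation (\ref{eq:CP_divisibility}), writing $\varepsilon_S(t_2,t_0)=\varepsilon_S(t_2,t_1)\varepsilon_S(t_1,t_0)$ with both intermediate system maps completely positive, and I would regard each of the two segments $[t_0,t_1]$ and $[t_1,t_2]$ as generated by its own joint system-environment unitary, so that the composition of these joint unitaries reproduces the evolution over the full interval $[t_0,t_2]$ (the iterated $U^2$ picture used in the proof of the previous theorem).

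Next I would apply that preceding theorem segment by segment. Since $\varepsilon_S(t_1,t_0)$ is CP, the environment channel $\varepsilon_E(t_1,t_0)$ over the same segment is CP; since $\varepsilon_S(t_2,t_1)$ is CP, the environment channel $\varepsilon_E(t_2,t_1)$ is likewise CP. This step transfers complete positivity from system to environment interval-by-interval and uses nothing beyond the earlier result.

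The remaining and principal task is to show that the two environment channels genuinely compose through the same intermediate time, i.e.~$\varepsilon_E(t_2,t_0)=\varepsilon_E(t_2,t_1)\varepsilon_E(t_1,t_0)$. I would obtain this by mirroring the Kraus factorisation (\ref{eq:firstkraus}): iterating the joint unitary produces two-step Kraus operators of the form $A_i\otimes B_j=\bra{a_i}U\ket{\Lambda}\otimes\bra{\psi_j}U\ket{\Psi}$, so the two-step environment Kraus operators split into products of single-step environment Kraus operators, one per segment. Together with the completeness relations already derived in (\ref{eq:firstkraus}), this exhibits $\varepsilon_E(t_2,t_0)$ as the composition of the two CP environment maps $\varepsilon_E(t_2,t_1)$ and $\varepsilon_E(t_1,t_0)$, which is exactly CP-divisibility.

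The hard part will be justifying this factorisation cleanly, because reduced dynamics need not be divisible even when the underlying global unitary factorises: correlations built up after the first segment can in principle spoil the decomposition. The point I would emphasise is that the CP-divisibility of the system channel is precisely what supplies the product structure of the Kraus operators (through $A_i=\bra{a_i}U\ket{\Lambda}$, as in the proof of the previous theorem), and the symmetry between system and environment under the joint unitary then forces the analogous product structure on the environment side, so that the environment factorisation is compelled rather than assumed.
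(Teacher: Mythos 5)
Your proposal follows essentially the same route as the paper's proof: the paper likewise splits the joint unitary as $U = W\cdot V$, applies the squared-unitary Kraus identity of (\ref{eq:firstkraus}) to each factor to obtain $\sum_{k,l,m,n}\bigl(C_k^\dagger E_m^\dagger E_m C_k \otimes D_l^\dagger F_n^\dagger F_n D_l\bigr)=\mathbb{I}$, and transfers the completeness relation from system to environment segment by segment, which is exactly your interval-by-interval invocation of the preceding theorem. The only divergence is your ``principal task'': in the paper's framework the factorisation $\varepsilon_E(t_2,t_0)=\varepsilon_E(t_2,t_1)\varepsilon_E(t_1,t_0)$ needs no separate argument, since divisibility is taken relative to the chosen splitting $U=WV$ --- the intermediate maps $\mathbb{D}$ and $\mathbb{F}$ are defined directly from $V$ and from $W$ acting on $\sigma_{SE}=V\rho_{SE}V^\dagger$ (with Kraus operators read off the marginals of $\sigma_{SE}$), so the composition holds by construction and the correlation issue you flag is absorbed into the definitions rather than proved away.
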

\begin{proof}
Let us define the state of the system as $\ket{\Psi}=\sum_j \sqrt{q_j}\ket{\psi_j}$ and the state of the environment as $\ket{\Lambda}=\sum_i\sqrt{p_i}\ket{a_i}$. 
According to (\ref{eq:CP_divisibility}), let $U$ be $U=W \cdot V$, where $V$ consists of a system map $\mathbb{C}$ with noise operators $C_k$ and an environment map $\mathbb{D}$ with noise operators $D_l$ and $W$ consists of a system map $\mathbb{E}$ with noise operators $E_m$ and an environment map $\mathbb{F}$ with noise operators $F_n$. Following (\ref{eq:unitary}), we can write:
\begin{eqnarray*}
    V\ket{\Psi}\ket{\Lambda} = \sum_k C_k\ket{\Psi}\ket{a_k}=\sum_l D_l \ket{\psi_l}\ket{\Lambda}.
\end{eqnarray*}
From (\ref{eq:unitary}), we have $U^2\ket{\Psi}\ket{\Lambda} = \sum_{i,j}(A_i \otimes B_j) \ket{\psi_j}\ket{a_i}$. Similarly, for $V$,
\begin{eqnarray}\label{eq:V}
    V^2\ket{\Psi}\ket{\Lambda}= \sum_{k,l} (C_k \otimes D_l) \ket{\psi_l}\ket{a_k},
\end{eqnarray}
and 
\begin{eqnarray}\label{eq:W}
    W^2 \ket{\Phi}\ket{\Gamma} = \sum_{m,n} (E_m \otimes F_n) \ket{\phi_n}\ket{b_m},
\end{eqnarray}
where $\ket{\Phi}=\sum_n \sqrt{r_n}\ket{\phi_n}$ and $\ket{\Gamma}=\sum_m \sqrt{s_m}\ket{b_m}$, and we have $\ket{\Phi}\ket{\Gamma}=V\ket{\Psi}\ket{\Lambda}$.

Thus, from (\ref{eq:V}) and (\ref{eq:W}), we can write:
\begin{eqnarray*}
    W^2 \cdot V^2 \ket{\Psi}\ket{\Lambda} &=& \sum_{k,l,m,n} (E_m C_k \otimes F_n D_l)\ket{\psi_l}\ket{a_k}\\
    &=& \sum_{i,j}(A_i\otimes B_j) \ket{\psi_j}\ket{a_i}. 
\end{eqnarray*}
The above equation and (\ref{eq:firstkraus}) implies:
\begin{eqnarray*}
    \sum_{i,j} (A_i ^\dagger A_i\otimes B_j ^\dagger B_j) &=& \sum_{k,l,m,n}(C_k ^\dagger E_m^\dagger E_m C_k \otimes D_l ^\dagger F_n ^\dagger F_n D_l ) \\
    &=& \mathbb{I}.
\end{eqnarray*}
Now, let $\mathbb{E}$ be a CP map. Then, we have $\sum_m E_m^\dagger E_m = \mathbb{I}$, which, in turn, implies that we must have $\sum_n F_n ^\dagger F_n=\mathbb{I}$. Then, the above equation yields
$\sum_{k,l} (C_k ^\dagger C_k \otimes D_l ^\dagger D_l)= \mathbb{I}$. Clearly, this means, if $\sum_k C_k ^\dagger C_k=\mathbb{I}$, then we must have $\sum_l D_l ^\dagger D_l = \mathbb{I}$. Hence, we can infer that if the system channel is CP-divisible then the environment channel will also be CP-divisible.
This proof can be generalized in a straightforward manner. Let us say, the quantum state for joint system and environment is given by $\rho_{SE}$. It is evolving as:
\begin{equation*}
    U \rho_{SE}U^\dagger = \sum_i A_i \rho_S A_i ^\dagger \otimes \ket{a_i}\bra{a_i}=\sum_j \ket{\psi_j}\bra{\psi_j}\otimes B_j \rho_E B_j ^\dagger ,
\end{equation*}
where $\rho_S = {\rm Tr}_E (\rho_{SE})=\sum_j q_j \ket{\psi_j}\bra{\psi_j}$ and $\rho_E={\rm Tr}_S (\rho_{SE})=\sum_i p_i \ket{a_i}\bra{a_i}$.
Now, we can write:
\begin{eqnarray*}
    U^2 \rho_{SE}(U^\dagger)^2 = \sum_{i,j} (A_i \otimes B_j) \ket{\psi_j}\bra{\psi_j} \otimes \ket{a_i}\bra{a_i} (A_i ^\dagger \otimes B_j ^\dagger).
\end{eqnarray*}
Let us split our unitary in two parts: $U=W \cdot V$. Thus, we have:
\begin{equation*}
    V^2 \rho_{SE}(V^\dagger)^2 = \sum_{k,l} (C_k \otimes D_l) \ket{\psi_l}\bra{\psi_l}\otimes \ket{a_k}\bra{a_k}(C_k ^\dagger \otimes D_l ^\dagger),
\end{equation*}
and 
\begin{equation*}
    W^2 \sigma_{SE} (W^\dagger)^2= \sum_{m,n} (E_m\otimes F_n) \ket{\phi_n}\bra{\phi_n}\otimes \ket{b_m}\bra{b_m}(E_m ^\dagger \otimes F_n ^\dagger),
\end{equation*}
where $\sigma_{SE}= V \rho_{SE}V^\dagger$ and ${\rm Tr}_E(\sigma_{SE})= \sum_n s_n \ket{\phi_n}\bra{\phi_n}= \sum_k C_k \rho_S C_k ^\dagger$, ${\rm Tr}_S(\sigma_{SE})= \sum_m r_m \ket{b_m}\bra{b_m}=\sum_l D_l \rho_E D_l ^\dagger$. Then, we get:
\begin{eqnarray*}
    U^2 \rho_{SE}(U^\dagger)^2 &=&W^2 V^2 \rho_{SE}(V^\dagger)^2 (W^\dagger)^2 \\
    &=& \sum_{i,j} (A_i \otimes B_j) \ket{\psi_j}\bra{\psi_j} \otimes \ket{a_i}\bra{a_i} (A_i ^\dagger \otimes B_j ^\dagger)\\
    &=& \sum_{k,l,m,n} (E_m C_k \otimes F_n D_l)\ket{\psi_l}\bra{\psi_l}\otimes\\
    &&\ket{a_k}\bra{a_k}(C_k ^\dagger E_m ^\dagger \otimes D_l ^\dagger F_n ^\dagger).
\end{eqnarray*}
Thus, we will have: $\sum_{i,j} A_i ^\dagger A_i \otimes B_j ^\dagger B_j = \sum_{k,l,m,n}(C_k ^\dagger E_m ^\dagger E_m C_k \otimes D_l ^\dagger F_n ^\dagger F_n D_l)= \mathbb{I}$. Let $\sum_m E_m ^\dagger E_m=\mathbb{I}$, i.e.~the map $\mathbb{E}$ is CP. Then, from our previous result, we must have $\sum_n  F_n ^\dagger F_n =\mathbb{I}$, i.e.~the map $\mathbb{F}$ must also be CP. This implies that we must have:
\begin{equation*}
    \sum_{k,l}C_k ^\dagger C_k \otimes D_l ^\dagger D_l = \mathbb{I}. 
\end{equation*}
Thus, if $\sum_k C_k ^\dagger C_k =\mathbb{I}$, we must have $\sum_l D_l ^\dagger D_l=\mathbb{I}$. That is, if the map $\mathbb{C}$ is CP, then so is the map $\mathbb{D}$. This means that if the system is CP-divisible, then the environment must also be CP-divisible, and if the system is non-CP divisible, then the environment must also be  non-CP divisible.
\end{proof}

\section{Examples}\label{sec:example}
\begin{enumerate} 
    \item Consider a 2-qubit \textbf{Bell state} 
    $\frac{1}{\sqrt{2}}(\ket{00} + \ket{11})$, created from $\ket{00}$, using the unitary:
    \[
    \begin{aligned}
    U &= \text{CNOT}(H \otimes \mathbb{I})\\
    &= \frac{1}{\sqrt{2}} \big[
    \ket{00}\bra{00} + \ket{00}\bra{10} + \ket{01}\bra{01} + \ket{01}\bra{11}\\ 
    &+ \ket{10}\bra{01} + \ket{11}\bra{00} - \ket{11}\bra{10} - \ket{10}\bra{11}
    \big].
    \end{aligned}
    \]
    Let $U = U_2 \cdot U_1$, where $U_1 = H \otimes \mathbb{I}$ and $U_2 = \text{CNOT}$. The input state for $U_1$ is $\ket{00}$. We have:
    \[
    \begin{aligned}
    U_1 &=  \frac{1}{\sqrt{2}}[\ket{00}\bra{00} + \ket{00}\bra{10} + \ket{01}\bra{01} + \ket{01}\bra{11}\\
    &+ \ket{10}\bra{00} -\ket{10}\bra{10} + \ket{11}\bra{01} - \ket{11}\bra{11}].
    \end{aligned}
    \]
    The Kraus operators of the noise acting on the system, i.e., qubit 1 are:
    \[
    S_0 = {}_2\bra{0} U_1 \ket{0}_2 
    =  \frac{1}{\sqrt{2}}[\ket{0}\bra{0} + \ket{0}\bra{1} + \ket{1}\bra{0} - \ket{1}\bra{1}],
    \]
    \[
    S_1 = {}_2\bra{1} U_1 \ket{0}_2 
    = 0,
    \]
    and those of the noise acting on the environment are:
    \[
    E_0 = {}_{1}\bra{0} U_1 \ket{0}_{1} = \frac{1}{\sqrt{2}}[\ket{0}\bra{0} + \ket{1}\bra{1}],
    \]
    \[
    E_1 = {}_{1}\bra{1} U_1 \ket{0}_{1} = \frac{1}{\sqrt{2}}[\ket{0}\bra{0} + \ket{1}\bra{1}].
    \]
    So, we have
    \[
    S_0^{\dagger}S_0 + S_1^{\dagger}S_1
    = \ket{0}\bra{0} +\ket{1}\bra{1} = \mathbb{I},
    \]
    \[
     E_0^{\dagger}E_0 +  E_1^{\dagger}E_1 
    = \ket{0}\bra{0} + \ket{1}\bra{1} = \mathbb{I}.
    \]
    
    Next, the input state for $U_2$ is ${\ket{+0}}$, where
    \[
    \begin{aligned}
    \ket{+} &= \frac{1}{\sqrt{2}}[\ket{0}+\ket{1}],\ket{-} = \frac{1}{\sqrt{2}}[\ket{0}-\ket{1}],
    \end{aligned}
    \]
    We have:
    \[
    \begin{aligned}
    U_2 &=\ket{00}\bra{00} + \ket{01}\bra{01} + \ket{10}\bra{11} + \ket{11}\bra{10} 
    \end{aligned}
    \]
    The Kraus operators of the noise acting on the system, i.e., qubit 1 are:
    \[
    S_0= {}_2\bra{0} U_2 \ket{0}_2  =\ket{0}\bra{0},
    \]
    \[
    S_1 = {}_2\bra{1} U_2 \ket{0}_2 = \ket{1}\bra{1},
    \]
    and those of the noise acting on the environment are:
    \[
    E_0 = {}_{1}\bra{+} U_2 \ket{+}_{1} =0.5[\ket{0}\bra{0} + \ket{0}\bra{1} + \ket{1}\bra{0} + \ket{1}\bra{1}],
    \]
    \[
    E_1 ={}_{1}\bra{-} U_2 \ket{+}_{1} = 0.5[\ket{0}\bra{0} - \ket{0}\bra{1} - \ket{1}\bra{0} + \ket{1}\bra{1}].
    \]
    So, we have
    \[
    S_0^{\dagger}S_0 + S_1^{\dagger}S_1
    = \ket{0}\bra{0} +\ket{1}\bra{1} = \mathbb{I},
    \]
    \[
     E_0^{\dagger}E_0 +  E_1^{\dagger}E_1 
    = \ket{0}\bra{0} + \ket{1}\bra{1} = \mathbb{I}.
    \]
    This implies that both system $S$ and environment $E$ are \textbf{CP-divisible} for Bell state, and they are unital \cite{self}.
    
    \item Consider a 3-qubit \textbf{GHZ state} 
    $\frac{1}{\sqrt{2}}(\ket{000} + \ket{111})$, created from $\ket{000}$, using the unitary:
    \[
    \begin{aligned}
    U &= (\mathbb{I} \otimes \text{CNOT})(\text{CNOT} \otimes \mathbb{I})(H \otimes \mathbb{I} \otimes \mathbb{I})\\
    &= \frac{1}{\sqrt{2}} \big[
    \ket{000}\bra{000} + \ket{000}\bra{100}  +\ket{001}\bra{001} +\ket{001}\bra{101}\\
    &+ \ket{010}\bra{011} + \ket{010}\bra{111}
    + \ket{011}\bra{010} +\ket{011}\bra{110}\\
    &+ \ket{100}\bra{010} -\ket{100}\bra{110} + \ket{101}\bra{011} -\ket{101}\bra{111}\\
    &+ \ket{110}\bra{001} - \ket{110}\bra{101} + \ket{111}\bra{000}-\ket{111}\bra{100}
    \big].
    \end{aligned}
    \]
    Let $U = U_3\cdot U_2 \cdot U_1$, where $U_1 = H \otimes \mathbb{I} \otimes \mathbb{I}$, $U_2 = \text{CNOT} \otimes \mathbb{I}$ and $U_3 = \mathbb{I} \otimes \text{CNOT}$.   
    The input state for $U_1$ is $\ket{000}$. We have:
    \[
    \begin{aligned}
    U_1 &= \frac{1}{\sqrt{2}}[\ket{000}\bra{000} + \ket{001}\bra{001} + \ket{010}\bra{010} + \ket{011}\bra{011} \\
    &-\ket{100}\bra{100} - \ket{101}\bra{101} - \ket{110}\bra{110} - \ket{111}\bra{111}\\
    &+\ket{000}\bra{100} + \ket{001}\bra{101} + \ket{010}\bra{110} + \ket{011}\bra{111}\\
    &+\ket{100}\bra{000} + \ket{101}\bra{001} + \ket{110}\bra{010} + \ket{111}\bra{011}].
    \end{aligned}
    \]
    The Kraus operators of the noise acting on the system, i.e., qubits 1,2 are:
    \[
    \begin{aligned}
    S_0 &= {}_3\bra{0} U_1 \ket{0}_3\\
    &= \frac{1}{\sqrt{2}}[\ket{00}\bra{00} + \ket{00}\bra{10} + \ket{01}\bra{01} + \ket{01}\bra{11} +\\
    &\quad \ket{10}\bra{00} - \ket{10}\bra{10} + \ket{11}\bra{01} - \ket{11}\bra{11}],
    \end{aligned}
    \]
    \[
    S_1 = {}_3\bra{1} U_1 \ket{0}_3 
    = 0,
    \]
    and those of the noise acting on the environment are:
    \[
    E_0 = {}_{12}\bra{00} U_1 \ket{00}_{12} = \frac{1}{\sqrt{2}}[\ket{0}\bra{0} + \ket{1}\bra{1}],
    \]
    \[
    E_1 ={}_{12}\bra{01} U_1 \ket{00}_{12} = 0,
    \]
    \[
    E_2 = {}_{12}\bra{10} U_1 \ket{00}_{12} = \frac{1}{\sqrt{2}}[\ket{0}\bra{0} + \ket{1}\bra{1}],
    \]
    \[
    E_3 = {}_{12}\bra{11} U_1 \ket{00}_{12} = 0.
    \]
    So, we have
    \[
    S_0^{\dagger}S_0 + S_1^{\dagger}S_1
    = \ket{00}\bra{00} +\ket{01}\bra{01} +\ket{10}\bra{10} +\ket{11}\bra{11} = \mathbb{I},
    \]
    \[
     E_0^{\dagger}E_0 +  E_1^{\dagger}E_1 + E_2^{\dagger}E_2 +  E_3^{\dagger}E_3 
    = \ket{0}\bra{0} + \ket{1}\bra{1} = \mathbb{I}.
    \]
    
    Next, the input state for $U_2$ is ${\ket{+00}}$, where
    \[
    \begin{aligned}
    \ket{+} &= \frac{1}{\sqrt{2}}[\ket{0}+\ket{1}],\ket{-} = \frac{1}{\sqrt{2}}[\ket{0}-\ket{1}].
    \end{aligned}
    \]
    We have:
    \[
    \begin{aligned}
    U_2 &= \ket{000}\bra{000} + \ket{001}\bra{001} + \ket{010}\bra{010} + \ket{011}\bra{011}\\
    &+\ket{100}\bra{110} + \ket{101}\bra{111} + \ket{110}\bra{100} + \ket{111}\bra{101}.
    \end{aligned}
    \]
    The Kraus operators of the noise acting on the system, i.e., qubits 1,2 are:
    \[
    S_0 = {}_3\bra{0} U_2 \ket{0}_3 
    = \ket{00}\bra{00} + \ket{01}\bra{01} + \ket{10}\bra{11} + \ket{11}\bra{10},
    \]
    \[
    S_1 = {}_3\bra{1} U_2 \ket{0}_3 
    = 0,
    \]
    and those of the noise acting on the environment are:
    \[
    E_0 = {}_{12}\bra{+0} U_2 \ket{+0}_{12} = 0.5[\ket{0}\bra{0} + \ket{1}\bra{1}],
    \]
    \[
    E_1 = {}_{12}\bra{+1} U_2 \ket{+0}_{12} = 0.5[\ket{0}\bra{0} + \ket{1}\bra{1}],
    \]
    \[
    E_2 = {}_{12}\bra{-0} U_2 \ket{+0}_{12} = 0,
    \]
    \[
    E_3 = {}_{12}\bra{-1} U_2 \ket{+0}_{12} = \frac{1}{\sqrt{2}}[\ket{0}\bra{0} + \ket{1}\bra{1}].
    \]
    So, we have
    \[
    S_0^{\dagger}S_0 + S_1^{\dagger}S_1
    = \ket{00}\bra{00} + \ket{01}\bra{01} + \ket{10}\bra{10} + \ket{11}\bra{11} = \mathbb{I},
    \]
    \[
     E_0^{\dagger}E_0 +  E_1^{\dagger}E_1 + E_2^{\dagger}E_2 +  E_3^{\dagger}E_3 
    = \ket{0}\bra{0} + \ket{1}\bra{1} = \mathbb{I}.
    \]
    Next, the input state for $U_3$ is ${\frac{1}{\sqrt{2}}[\ket{000}+\ket{110}]}$ = $\ket{\phi^+ 0}$, where
    \[
    \begin{aligned}
    \ket{\phi^+} &= \frac{1}{\sqrt{2}}[\ket{00}+\ket{11}],\ket{\phi^-} = \frac{1}{\sqrt{2}}[\ket{00}-\ket{11}],\\
    \ket{\psi^+} &= \frac{1}{\sqrt{2}}[\ket{01}+\ket{10}],\ket{\psi^-} = \frac{1}{\sqrt{2}}[\ket{01}-\ket{10}].
    \end{aligned}
    \]
    We have:
    \[
    \begin{aligned}
    U_3 &= \ket{000}\bra{000} + \ket{001}\bra{001} + \ket{010}\bra{011} + \ket{011}\bra{010}\\
    &+\ket{100}\bra{100} + \ket{101}\bra{101} + \ket{110}\bra{111} + \ket{111}\bra{110}.
    \end{aligned}
    \]
    The Kraus operators of the noise acting on the system, i.e., qubits 1,2 are:
    \[
    S_0 ={}_3\bra{0} U_3 \ket{0}_3 =  \ket{00}\bra{00} + \ket{10}\bra{10},
    \]
    \[
    S_1 ={}_3\bra{1} U_3 \ket{0}_3 = \ket{01}\bra{01} + \ket{11}\bra{11},
    \]
    and those of the noise acting on the environment are:
    \[
    E_0 ={}_{12}\bra{\phi^+} U_3 \ket{\phi^+}_{12} = 0.5[\ket{0}\bra{0} + \ket{0}\bra{1} + \ket{1}\bra{0} + \ket{1}\bra{1}],
    \]
    \[
    E_1 = {}_{12}\bra{\phi^-} U_3 \ket{\phi^+}_{12} = 0.5[\ket{0}\bra{0} - \ket{0}\bra{1} - \ket{1}\bra{0} + \ket{1}\bra{1}] ,
    \]
    \[
    E_2 = {}_{12}\bra{\psi^+} U_3 \ket{\phi^+}_{12} =0,
    \]
    \[
    E_3 = {}_{12}\bra{\psi^-} U_3 \ket{\phi^+}_{12} =0.
    \]
    So, we have
    \[
    S_0^{\dagger}S_0 + S_1^{\dagger}S_1
    = \ket{00}\bra{00} + \ket{01}\bra{01} + \ket{10}\bra{10} + \ket{11}\bra{11} = \mathbb{I},
    \]
    \[
     E_0^{\dagger}E_0 +  E_1^{\dagger}E_1 + E_2^{\dagger}E_2 +  E_3^{\dagger}E_3 
    = \ket{0}\bra{0} + \ket{1}\bra{1} = \mathbb{I}.
    \]
    This implies that both system $S$ and environment $E$ are \textbf{CP-divisible} for GHZ state, and they are unital \cite{self}.
    \item Consider a 3-qubit \textbf{W state} 
    $\frac{1}{\sqrt{3}}(\ket{001} + \ket{010} + \ket{100})$, created from $\ket{100}$, using the unitary:
    \[
    \begin{aligned}
    U &= \ket{000}\bra{000} + \frac{1}{\sqrt{3}}\ket{000}\bra{100} -\frac{1}{\sqrt{3}}\ket{001}\bra{010} \\ 
    &\quad +\frac{1}{\sqrt{3}}\ket{001}\bra{100} - \frac{1}{\sqrt{3}}\ket{010}\bra{001} + \frac{1}{\sqrt{3}}\ket{010}\bra{011}
    \\ 
    &\quad +\frac{1}{\sqrt{3}}\ket{010}\bra{100} + \ket{011}\bra{101} + \frac{1}{\sqrt{3}}\ket{100}\bra{010} \\
    &\quad -\frac{1}{\sqrt{3}}\ket{100}\bra{011} + \frac{1}{\sqrt{3}}\ket{100}\bra{100} + \ket{101}\bra{110} \\
    &\quad +\frac{1}{\sqrt{6}}\ket{110}\bra{001} + \frac{1}{\sqrt{6}}\ket{110}\bra{010} + \frac{1}{\sqrt{6}}\ket{110}\bra{011} \\ 
    &\quad +\frac{1}{\sqrt{2}}\ket{110}\bra{111} + \frac{1}{\sqrt{6}}\ket{111}\bra{001} + \frac{1}{\sqrt{6}}\ket{111}\bra{010}  \\
    &\quad +\frac{1}{\sqrt{6}}\ket{111}\bra{011} - \frac{1}{\sqrt{2}}\ket{111}\bra{111}.
    \end{aligned}
    \]

    Let $U = U_2\cdot U_1$, where $U_1 = iU^2$, $U_2 = -iU^{-1}$. The input state for $U_1$ is $\ket{100}$. We have:
    \[
    \begin{aligned}
    U_1 &= i[\ket{000}\bra{000} + 0.67\ket{001}\bra{001} - 0.67\ket{001}\bra{011} \\
    &\quad +0.33\ket{001}\bra{100} - 0.33\ket{010}\bra{001} + 0.67\ket{010}\bra{010} \\
    &\quad -0.33\ket{010}\bra{011} + 0.57\ket{010}\bra{101} + \ket{011}\bra{110} \\
    &\quad -0.33\ket{100}\bra{001} + 0.33\ket{100}\bra{010} + 0.67\ket{100}\bra{100} \\
    &\quad -0.57\ket{100}\bra{101} + 0.41\ket{101}\bra{001} + 0.41\ket{101}\bra{010} \\
    &\quad +0.41\ket{101}\bra{011} + 0.71\ket{101}\bra{111} + 0.28\ket{110}\bra{001} \\
    &\quad +0.053\ket{110}\bra{010} + 0.52\ket{110}\bra{011} + 0.471\ket{110}\bra{100} \\
    &\quad +0.41\ket{110}\bra{101} - 0.5\ket{110}\bra{111} - 0.28\ket{111}\bra{001} \\
    &\quad -0.52\ket{111}\bra{010} - 0.053\ket{111}\bra{011} + 0.471\ket{111}\bra{100} \\
    &\quad +0.41\ket{111}\bra{101} + 0.5\ket{111}\bra{111}] 
    \end{aligned}
    \]

    The Kraus operators of the noise acting on the system, i.e., qubits 1,2 are:
    \[
    \begin{aligned}
    S_0 &= {}_3\bra{0} U_1 \ket{0}_3 \\
    &= i\ket{00}\bra{00} + 0.67i\ket{01}\bra{01} + 0.33i\ket{10}\bra{01}\\
    &+0.67i\ket{10}\bra{10} + 0.053i\ket{11}\bra{01} + 0.471i\ket{11}\bra{11},
    \end{aligned}
    \]
    \[
    \begin{aligned}
    S_1 &= {}_3\bra{1} U_1 \ket{0}_3 \\
    &= 0.33i\ket{00}\bra{10} + i\ket{01}\bra{11}+ 0.41i\ket{10}\bra{01}\\
    &-0.52i\ket{11}\bra{01} + 0.471i\ket{11}\bra{10},
     \end{aligned}
    \]
    and those of the noise acting on the environment are:
    \[
    E_0 = {}_{12}\bra{00} U_1 \ket{10}_{12} = 0.33i\ket{1}\bra{0} ,
    \]
    \[
    E_1 = {}_{12}\bra{01} U_1 \ket{10}_{12} = 0.57i\ket{0}\bra{1},
    \]
    \[
    E_2 = {}_{12}\bra{10} U_1 \ket{10}_{12} = 0.67i\ket{0}\bra{0} - 0.57i\ket{0}\bra{1},
    \]
    \[
    \begin{aligned}
    E_3 &= {}_{12}\bra{11} U_1 \ket{10}_{12} = 0.471i\ket{0}\bra{0} + 0.41i\ket{0}\bra{1}\\
    &+0.471i\ket{1}\bra{0} + 0.41i\ket{1}\bra{1}.
    \end{aligned}
    \]
    So, we have
    \[
    S_0^{\dagger}S_0 + S_1^{\dagger}S_1
    = \ket{00}\bra{00} +\ket{01}\bra{01} +\ket{10}\bra{10} +\ket{11}\bra{11} = \mathbb{I},
    \]
    \[
     E_0^{\dagger}E_0 +  E_1^{\dagger}E_1 + E_2^{\dagger}E_2 +  E_3^{\dagger}E_3 
    = \ket{0}\bra{0} + \ket{1}\bra{1} = \mathbb{I}.
    \]
    
    Next, the input state for $U_2$ is $0.33i\ket{001} + 0.67i\ket{100} + 0.471i[\ket{110} + \ket{111}]$. We have:
    \[
    \begin{aligned}
    U_2 &= i[-\ket{000}\bra{000} - 0.57\ket{001}\bra{001} + 0.57\ket{001}\bra{010} \\
    &\quad -0.408\ket{001}\bra{110} - 0.408\ket{001}\bra{111} + 0.577\ket{010}\bra{001} \\
    &\quad -0.577\ket{010}\bra{100} - 0.408\ket{010}\bra{110} - 0.408\ket{010}\bra{111} \\
    &\quad -0.577\ket{011}\bra{010} + 0.577\ket{011}\bra{100} - 0.408\ket{011}\bra{110} \\
    &\quad -0.408\ket{011}\bra{111} - 0.577\ket{100}\bra{001} - 0.577\ket{100}\bra{010} \\
    &\quad -0.577\ket{100}\bra{100} - \ket{101}\bra{011} - \ket{110}\bra{101} \\
    &\quad -0.707\ket{111}\bra{110} + 0.707\ket{111}\bra{111}]. 
    \end{aligned}
    \]
    The input effective system state to $U_2$ is: 
    \[
    \begin{aligned}
\vartheta_1\ket{\xi^+}\bra{\xi^+}+\vartheta_2\ket{\xi^-}\bra{\xi^-} +\vartheta_3\ket{\kappa^+}\bra{\kappa^+}
    +\vartheta_4\ket{\kappa^-}\bra{\kappa^-},
    \end{aligned}
    \]
    where
    \[
    \begin{aligned}
    \vartheta_1 &= 0, \vartheta_2= 0.2209, \vartheta_3 = 0.7791, \vartheta_4 = 0,\\
    \ket{\xi^+}&= 0.7595\ket{00} - 0.6291\ket{01} + 0.1653\ket{10} + 0\ket{11}, \\
    \ket{\xi^-}&= 0\ket{00} + 0\ket{01} + 0\ket{10} + 1\ket{11}, \\
    \ket{\kappa^+}&= 0.3741\ket{00} + 0.6304\ket{01} + 0.6802\ket{10} + 0\ket{11}, \\
    \ket{\kappa^-}&= -0.5321\ket{00} - 0.4548\ket{01} + 0.7142\ket{10} + 0\ket{11}.
    \end{aligned}
    \]
    The input effective environment state to $U_2$ is: 
    \[
    \begin{aligned}
    \varsigma_1\ket{\chi^+}\bra{\chi^+}+\varsigma_2\ket{\chi^-}\bra{\chi^-},
    \end{aligned}
    \]
    where
    \[
    \begin{aligned}
    \varsigma_1 &= 0.7791, \varsigma_2 = 0.2209, \\
    \ket{\chi^+}&= 0.8967\ket{0} - 0.4426\ket{1}, \\
    \ket{\chi^-}&= 0.4426\ket{0} + 0.8967\ket{1}
    \end{aligned}
    \]
    
    The Kraus operators of the noise acting on the system, i.e., qubits 1,2 are:
    \[
    \begin{aligned}
    S_0 &= \sqrt{\varsigma_1}\bra{\chi^+} U_2 \ket{\chi^+} +\sqrt{\varsigma_2}\bra{\chi^+} U_2  \ket{\chi^-} \\
    &= -0.73i\ket{00}\bra{00} + 0.149i\ket{00}\bra{01} - 
    0.252i\ket{00}\bra{11}\\ 
    &+0.42i\ket{01}\bra{00}
    - 0.149i\ket{01}\bra{01} - 0.153i\ket{01}\bra{10}\\ 
    &-0.762i\ket{01}\bra{11} - 0.42i\ket{10}\bra{00} - 0.661i\ket{10}\bra{01}\\
    &-0.302i\ket{10}\bra{10} - 0.728i\ket{11}\bra{10} + 0.0715i\ket{11}\bra{11},
    \end{aligned}
    \]
    \[
    \begin{aligned}
    S_1 &= \sqrt{\varsigma_1}\bra{\chi^-} U_2 \ket{\chi^+} +\sqrt{\varsigma_2}\bra{\chi^-} U_2 \ket{\chi^-} \\
    &= -0.162i\ket{00}\bra{00} + 0.302i\ket{00}\bra{01} - 0.51i\ket{00}\bra{11}\\
    &-0.207i\ket{01}\bra{00}
    - 0.302i\ket{01}\bra{01} + 0.451i\ket{01}\bra{10} \\
    &- 0.258i\ket{01}\bra{11} + 0.207i\ket{10}\bra{00} - 0.579i\ket{10}\bra{01}\\
     &+ 0.149i\ket{10}\bra{10} + 0.359i\ket{11}\bra{10} + 0.145i\ket{11}\bra{11},
     \end{aligned}
    \]
    and those of the noise acting on the environment are:
    \[
    \begin{aligned}
    E_0 &=\sqrt{\vartheta_1}\bra{\xi^+} U_2 \ket{\xi^+} +\sqrt{\vartheta_2}\bra{\xi^+} U_2 \ket{\xi^-} \\ 
    &+ \sqrt{\vartheta_3}\bra{\xi^+} U_2 \ket{\kappa^+} + \sqrt{\vartheta_4}\bra{\xi^+}U_2\ket{\kappa^-}\\
    &= -0.079i\ket{0}\bra{0} + 0.509i\ket{0}\bra{1}\\
    &+0.027i\ket{1}\bra{0} - 0.22i\ket{1}\bra{1},
    \end{aligned}
    \]
    \[
    \begin{aligned}
    E_1 &= \sqrt{\vartheta_1}\bra{\xi^-} U_2 \ket{\xi^+} +\sqrt{\vartheta_2}\bra{\xi^-} U_2 \ket{\xi^-}\\ 
    &+ \sqrt{\vartheta_3}\bra{\xi^-} U_2 \ket{\kappa^+} + \sqrt{\vartheta_4}\bra{\xi^-}U_2\ket{\kappa^-}\\
    &= -0.42i\ket{0}\bra{0} + 0.462i\ket{0}\bra{1}\\
    &+0.24i\ket{1}\bra{0} - 0.081i\ket{1}\bra{1},
    \end{aligned}
    \]
    \[
    \begin{aligned}
    E_2  &= \sqrt{\vartheta_1}\bra{\kappa^+} U_2 \ket{\xi^+} +\sqrt{\vartheta_2}\bra{\kappa^+} U_2 \ket{\xi^-} \\
    &+\sqrt{\vartheta_3}\bra{\kappa^+} U_2 \ket{\kappa^+} + \sqrt{\vartheta_4}\bra{\kappa^+}U_2\ket{\kappa^-}\\
    &= -0.327i\ket{0}\bra{0} - 0.581i\ket{0}\bra{1} \\
    &-0.238i\ket{1}\bra{0} + 0.196i\ket{1}\bra{1},
    \end{aligned}
    \]
    \[
   \begin{aligned}
    E_3 &= \sqrt{\vartheta_1}\bra{\kappa^-} U_2 \ket{\xi^+} +\sqrt{\vartheta_2}\bra{\kappa^-} U_2 \ket{\xi^-}  \\
    &+\sqrt{\vartheta_3}\bra{\kappa^-} U_2 \ket{\kappa^+} + \sqrt{\vartheta_4}\bra{\kappa^-}U_2\ket{\kappa^-}\\
    &= -0.687i\ket{0}\bra{0} - 0.256i\ket{0}\bra{1} \\
    &+
    0.347i\ket{1}\bra{0} - 0.17i\ket{1}\bra{1}.
    \end{aligned}
    \]
So, we have
    \[
    S_0^{\dagger}S_0 + S_1^{\dagger}S_1
    = \ket{00}\bra{00} + \ket{01}\bra{01} + \ket{10}\bra{10} + \ket{11}\bra{11} = \mathbb{I},
    \]
    \[
     E_0^{\dagger}E_0 +  E_1^{\dagger}E_1 + E_2^{\dagger}E_2 +  E_3^{\dagger}E_3 
    = \ket{0}\bra{0} + \ket{1}\bra{1} = \mathbb{I}.
    \]
   
    This implies that both system $S$ and environment $E$ are \textbf{CP-divisible} for W state, despite being non-unital \cite{self}.

    \item Consider the single-qubit transpose operation in $\mathcal{B}$ form \cite{Jagadish_2018}, $\mathcal{B}_T=\left[\begin{array}{cccc}
        1 & 0 & 0 & 0 \\
        0 & 0 & 1 & 0 \\
        0 & 1 & 0 & 0 \\
        0 & 0 & 0 & 1
    \end{array}\right]$. Clearly, this is a non-CP map, since the Kraus operators are $E_0:= D_0=\ket{0}\bra{0}$, $E_1:=D_1 = \ket{1}\bra{1}$, $E_2:=D_2 = \ket{0}\bra{1}+\ket{1}\bra{0}$ and $E_3:=F_3 = \ket{1}\bra{0}-\ket{0}\bra{1}$, which satisy the (trace-preserving) condition $D_0^\dagger D_0+D_1^\dagger D_1+D_2^\dagger D_2-F_3^\dagger F_3=\mathbb{I}$, but not the completeness relation $\sum_{i=0}^3E_i^\dagger E_i=\mathbb{I}$.
    Then, the $\mathcal{A}$ form \cite{Jagadish_2018} of the partial transpose operation on a two-qubit system, obtained from the matrix $\mathbb{I}\otimes\mathcal{B}_T$:
    
    \relscale{0.54}
    \[
    \begin{aligned}
        \left[\begin{array}{cccccccccccccccc}
            1 & 0 & 0 & 0 & 0 & 0 & 0 & 0 & 0 & 0 & 1 & 0 & 0 & 0 & 0 & 0 \\
            0 & 0 & 0 & 0 & 0 & 0 & 0 & 0 & 0 & 0 & 0 & 0 & 0 & 0 & 0 & 0 \\
            0 & 1 & 0 & 0 & 0 & 0 & 0 & 0 & 0 & 0 & 0 & 1 & 0 & 0 & 0 & 0 \\
            0 & 0 & 0 & 0 & 0 & 0 & 0 & 0 & 0 & 0 & 0 & 0 & 0 & 0 & 0 & 0 \\
            0 & 0 & 0 & 0 & 1 & 0 & 0 & 0 & 0 & 0 & 0 & 0 & 0 & 0 & 1 & 0 \\
            0 & 0 & 0 & 0 & 0 & 0 & 0 & 0 & 0 & 0 & 0 & 0 & 0 & 0 & 0 & 0 \\
            0 & 0 & 0 & 0 & 0 & 1 & 0 & 0 & 0 & 0 & 0 & 0 & 0 & 0 & 0 & 1 \\
            0 & 0 & 0 & 0 & 0 & 0 & 0 & 0 & 0 & 0 & 0 & 0 & 0 & 0 & 0 & 0 \\
            0 & 0 & 0 & 0 & 0 & 0 & 0 & 0 & 0 & 0 & 0 & 0 & 0 & 0 & 0 & 0 \\
            1 & 0 & 0 & 0 & 0 & 0 & 0 & 0 & 0 & 0 & 1 & 0 & 0 & 0 & 0 & 0 \\
            0 & 0 & 0 & 0 & 0 & 0 & 0 & 0 & 0 & 0 & 0 & 0 & 0 & 0 & 0 & 0 \\
            0 & 1 & 0 & 0 & 0 & 0 & 0 & 0 & 0 & 0 & 0 & 1 & 0 & 0 & 0 & 0 \\
            0 & 0 & 0 & 0 & 0 & 0 & 0 & 0 & 0 & 0 & 0 & 0 & 0 & 0 & 0 & 0 \\
            0 & 0 & 0 & 0 & 1 & 0 & 0 & 0 & 0 & 0 & 0 & 0 & 0 & 0 & 1 & 0 \\
            0 & 0 & 0 & 0 & 0 & 0 & 0 & 0 & 0 & 0 & 0 & 0 & 0 & 0 & 0 & 0 \\
            0 & 0 & 0 & 0 & 0 & 1 & 0 & 0 & 0 & 0 & 0 & 0 & 0 & 0 & 0 & 1
        \end{array}\right],
    \end{aligned}
    \]\normalsize
    acting on the vectorized pure Bell state density matrix, \relscale{0.55}${ \left[\begin{array}{cccccccccccccccc} \frac{1}{2} & 0 & 0 & \frac{1}{2} & 0 & 0 & 0 & 0 & 0 & 0 & 0 & 0 & \frac{1}{2} & 0 & 0 & \frac{1}{2} \end{array}\right]^T}$\normalsize 
    
    yields \relscale{0.55}$\left[\begin{array}{cccccccccccccc} \frac{1}{2} & 0 & 0 & 0 & 0 & 0 & \frac{1}{2} & 0 & 0 & \frac{1}{2} & 0 & 0 & 0 & 0\end{array}\right.$\\ $\left.\begin{array}{cc}  0 & \frac{1}{2} \end{array}\right]^T$\normalsize,\, that upon matricizing and realignment gives a mixed state (pseudo-) density matrix \cite{pseudo} $\left[\begin{array}{cccc} \frac{1}{2} & 0 & 0 & 0 \\ 0 & 0 & \frac{1}{2} & 0 \\ 0 & \frac{1}{2} & 0 & 0 \\ 0 & 0 & 0 & \frac{1}{2} \end{array}\right]$, that is non-positive-semi-definite, as expected as a signature of the Bell state being entangled \cite{peres}. This implies that $\mathbb{I}$ being a CP map cannot be the system channel, if the environment channel is a non-CP partial transpose $\mathcal{B}_T$ map. If we instead consider $\mathcal{B}_T\otimes\mathcal{B}_T$, the input pure Bell state remains unchanged at the output, suggesting that $\mathcal{B}_T\otimes\mathcal{B}_T$ is a unitary (noiseless) CPTP map, with both the system and environment being individually non-CP.
\end{enumerate}

\bibliographystyle{IEEEtran}
\bibliography{cptp}
\end{document}